 \newtheorem{theorem}{Theorem}[section]
 \newtheorem{lem}[theorem]{Lemma}
 \newtheorem{lemma}[theorem]{Lemma}
 \theoremstyle{definition}
 \theoremstyle{remark}
 \numberwithin{equation}{section}
 \newtheorem{proposition}[theorem]{Proposition}
\newcommand{\be}{\begin{equation}}
\newcommand{\ee}{\end{equation}}
\begin{document}
\title{Lifschitz Tails for Random Schr\"{o}dinger Operator in Bernoulli Distributed Potentials}
\date{\today}
%\author{Michael Bishop, Vita Borovyk,  Jan Wehr}

\author{Michael Bishop}
\address{Department of Mathematics, University of California, Davis,
Davis, CA 95616}
\email{mbishop@math.ucdavis.edu}
\author{Vita Borovyk}
\address{Department of Mathematics, University of Cincinnati,
Cincinnati, OH 45221-0025}
\email{Vita.Borovyk@uc.edu}
\author{Jan Wehr}
\address{Department of Mathematics, University of Arizona,
Tucson, AZ 85721-0089}
\email{wehr@math.arizona.edu}

\begin{abstract}
This paper presents an elementary proof of Lifschitz tail behavior for random discrete Schr\"{o}dinger operators with a Bernoulli-distributed potential.  The proof approximates the low eigenvalues by eigenvalues of sine waves supported where the potential takes its lower value.  This is motivated by the idea that the eigenvectors associated to the low eigenvalues react to the jump in the values of the potential as if the gap were infinite.

\end{abstract}

\maketitle
\section{Introduction}
\label{sec:1}

In 1965, Lifschitz discovered that the density of states for disordered quantum systems exhibited an exponential structure at the ends of the spectrum \cite{Lifschitz65}; this property of random Schr\"{o}dinger operators is referred to as Lifschitz tail in his honor.  Random Schr\"{o}dinger operators take the form $-\Delta + V(x)$, where $\Delta$ is the Laplacian operator and $V(x)$ is a multiplication operator which is chosen from a random distribution; see \cite{Kirsch08} for a general introduction to random Schr\"{o}dinger operators.  Lifschitz argued that for an eigenvalue of an eigenfunction near the bottom of the spectrum, both the contribution from the Laplacian and the contribution from the multiplication operator $V$ must both be at the bottom of their respective spectra.  The former requires that the associated eigenfunction must be supported on a large set, while the latter requires that the value of the multiplication operator must be small for most of the support of the eigenfuntion.  The value of the potential function at a point in space is assumed to be random, independent of the value of the potential at other points, and bounded both above and below.  The probability that the potential function takes values near the bottom of its range for most sites in a specific large set is exponentially small in the size of the set.  In a large system, this probability forces the proportion of states near the bottom of the spectrum to also be exponentially small.  

The mathematical proofs that followed Lifschitz's discovery made precise this argument precise in a variety of settings \cite{BenderskiPastur,Pastur72,Fukushima74,FriedbergLuttinger,FukushimaNagaiNakao,Luttinger76,Nagai77,Nakao77,RomerioWreszinski,KirschMartinelli83,Simon85,KirschSimon86,Stollmann99,Schulz-Baldes,LuckNieu85,LuckNieu86,LuckNieu88}.  The proof presented in this paper follows a similar structure to \cite{Simon85}.  In \cite{Simon85}, the kinetic energy, the contribution to the eigenvalue due to the Laplacian, is bounded using Dirichlet-Neumann bracketing: the domain of the eigenfunctions is partitioned into boxes, the Laplacian operator is bounded by sums of operators defined on those boxes with Dirichlet or Neumann boundary conditions, then the operator bounds are used to bound the eigenvalues.  These bounds on the eigenvalues determine the size of a sufficiently large set to support an eigenstate with small eigenvalue.  Using large deviations theory, the probability that a given potential function will be sufficiently small on a sufficiently large box is shown to be exponentially small.  In a large system limit, the proportion of eigenfunctions near the bottom of the spectrum is also exponentially small.

The following proof of Lifschitz tail behavior for the one dimensional discrete random Schr\"{o}dinger operator with Bernoulli-distributed potential follows similar intuition to the proof structure described above with key steps replaced. The long intervals of zero potential act as support for the low energy states, replacing the boxes which partition the space in other proofs.  Instead of using Dirichlet-Neumann bracketing to bound the kinetic energy, the bounds are derived by optimizing the energy of sine waves on intervals of zero potential with boundary conditions weighted by the potential energy contribution of neighboring sites of positive potential.  The large deviations calculation used above is replaced by the discrete exponential distribution of the intervals (also known as a geometric distribution).  The picture that motivated this proof is that the excited state energies (the lowest eigenvalues) are approximated by the energies of sine waves supported on long intervals of zero potential.  This follows the way of thinking in \cite{BishopWehr12} where the ground state energy (the lowest eigenvalue) of this operator is approximated by the energy of a sine wave supported on the longest interval of sites where the potential is zero.  

This paper is structured as follows.  First, the operator, eigenvalues, and density of states are defined and the Lifschitz tail result is stated for the parameters defined.  Second, the operator is bounded above by an operator where states are excluded from sites of positive potential (essentially by making the potential infinite), bounding the excited state energies.  By estimating the distribution of intervals of zero potential, these upper bounds on energies are used to bound the density of states from below in the large system limit.  Third, the kinetic energy of a sine wave with weighted boundary conditions is bounded below, where this bound depends on the approximate integer frequency of the sine wave.  These lower bounds on energy provide an upper bound on the number of states a given interval can support with small energy.  Using the estimates on the distribution of intervals derived for the lower bound on the density of states, this upper bound on the number of states for a given interval is used to bound the density of states from above.

On the Hilbert space $\mathcal{H}=\ell^2\{0, \dots, L+1\}$, consider the operator $H = -\Delta + V$ with Dirichlet boundary conditions at $0$ and at $L+1$.  The Laplacian $-\Delta$ is defined as $-\Delta f(x) = 2f(x) - f(x-1) - f(x+1)$ and $V$ is a realization of $L$ i.i.d. (independent and identically distributed) random variables $V(x)$, $x = 1, \dots, L$.
on a probability space $(\Omega, \mathcal{F}, P)$ with the Bernoulli distribution
\begin{equation}
V(x) = \begin{cases}0 \quad \textrm{ with probability } p \\ b \quad \textrm{ with probability } q = 1 - p. \end{cases}
\end{equation} 

For each realization of the potential, the corresponding $H$ is self-adjoint, and therefore it has $L$ real eigenvalues counting the multiplicities:
\begin{equation}																			\label{eq:eigenvalues}
E_1 \le E_2 \le ... \le E_{L}.
\end{equation}
Each of the eigenvalues in the sequence is a random variable, however we suppress the dependence on $\omega \in \Omega$ in the notation.

Denote by $N_L(\epsilon)$ the eigenvalue counting function for the operator $H$: $N_L(\epsilon) = \#\{E_k: E_k < \epsilon\}$.  While $N_L(\epsilon)$ is a random variable, it self-averages in the infinite volume limit.  That is, with probability one
\begin{equation}
k(\epsilon) = \lim_{L \to \infty} \frac {N_L(\epsilon)}{L}
\end{equation}
exists and its value is independent of the realization of the (infinite-volume) potential. $k(\epsilon)$ is called the {\it integrated density of states}. See \cite{Kirsch08} for details. 

\begin{theorem}  For $\epsilon$ small enough the integrated density of states $k(\epsilon)$ satisfies
\begin{eqnarray}
\label{eq:LifTailslowerbnd} &&k(\epsilon) \geq\liminf_{L \to \infty} \frac{N_L(\epsilon)}{L} \geq \frac{qp^{\frac{\pi}{\sqrt{\epsilon}}}}{1 - p^{\frac{\pi}{\sqrt{\epsilon}}}}\\
\label{eq:LifTailsupperbnd}&&k(\epsilon) \leq \limsup_{L\to\infty} \frac{N_L(\epsilon)}{L} \leq \frac{q p^{\frac{\pi}{\sqrt{\epsilon}}- \frac {\pi^2}{b}}}{p^2\left(1-p^{\frac{\pi}{\sqrt{\epsilon}} + O(\sqrt{\epsilon})}\right)}
\end{eqnarray}
Equation (\ref{eq:LifTailsupperbnd}) holds for $\epsilon$ small enough. 
\end{theorem}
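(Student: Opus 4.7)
The plan is to prove the two inequalities separately; both follow a common template built from (i) comparing $H$ to operators that effectively decouple the maximal intervals of zero potential, (ii) an explicit count of eigenvalues below $\epsilon$ contributed by each such interval, and (iii) a law-of-large-numbers estimate for the density of maximal zero-runs of a given length under the i.i.d.\ Bernoulli measure.

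For the lower bound \eqref{eq:LifTailslowerbnd} I would dominate $H$ from above by the operator $\tilde H$ obtained by raising the potential to $+\infty$ on every $b$-site; equivalently, $\tilde H = \bigoplus_I (-\Delta)_I^{\mathrm{D}}$ is the direct sum of free discrete Laplacians with Dirichlet boundary conditions over the maximal zero-intervals $I$. The min-max principle then yields $N_L(\epsilon) \geq \sum_I N_I(\epsilon)$, where $N_I(\epsilon)$ counts the Dirichlet eigenvalues of $(-\Delta)_I^{\mathrm{D}}$ below $\epsilon$. On an interval of length $n$ these eigenvalues are $2(1 - \cos(j\pi/(n+1)))$ for $j = 1, \ldots, n$, and the inequality $2(1-\cos\theta) \leq \theta^2$ shows at least $\lfloor (n+1)\sqrt{\epsilon}/\pi\rfloor$ of them lie below $\epsilon$. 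In the bulk, the event ``site $x$ is the start of a maximal zero-run of length at least $n$'' has probability $q p^n$, so by the strong law of large numbers applied to the indicator sums (which are finite-range dependent in $x$),
$$
\liminf_{L\to\infty}\frac{N_L(\epsilon)}{L} \;\geq\; q\sum_{j\geq 1} p^{\lceil j\pi/\sqrt{\epsilon}\rceil - 1} \;\geq\; \frac{q\,p^{\pi/\sqrt{\epsilon}}}{1 - p^{\pi/\sqrt{\epsilon}}},
$$
up to negligible contributions from the two intervals adjacent to the system boundaries $0$ and $L+1$.

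For the upper bound \eqref{eq:LifTailsupperbnd} the strategy is dual but more delicate, because one must bound $H$ from below rather than above. I would decouple the system at each $b$-site and bound $H$ below by a direct sum of operators on the zero-intervals that retain a weighted coupling to the neighboring $b$-sites. For any normalized eigenfunction $\psi$ with eigenvalue $\leq \epsilon$, the pointwise estimate $b\,|\psi(x)|^2 \leq \langle \psi, V\psi\rangle \leq \epsilon$ at each $b$-site caps the amplitude that leaks across interval boundaries. Expanding the restriction $\psi|_I$ of $\psi$ to an interval $I = \{a+1,\ldots,a+n\}$ in the Dirichlet sine basis $\phi_j(x)=\sin(j\pi(x-a)/(n+1))$ and bounding the weighted Dirichlet quadratic form from below near each frequency $j$ would yield an inequality of the form
$$
\langle \psi|_I, H\psi|_I\rangle \;\geq\; \Bigl(2 - 2\cos\tfrac{j^{*}\pi}{n+1}\Bigr)\,\|\psi|_I\|^2 \;-\; \frac{C}{b}\,\|\psi|_I\|^2,
$$
where $j^{*}$ is the nearest-integer frequency. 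This shows that an interval of length $n$ can support at most roughly $(n+1)\sqrt{\epsilon}/\pi + \pi/\sqrt{b}$ eigenvalues below $\epsilon$, which is exactly the shift producing the $-\pi^2/b$ term in the exponent. Summing over $j$ after applying the same bulk density $qp^n$ for zero-runs then gives \eqref{eq:LifTailsupperbnd}, the extra $p^{-2}$ in the denominator reflecting the $b$-sites that must border each zero-interval.

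The main obstacle is the kinetic-energy lower bound with weighted boundary conditions that underpins the upper bound. The subtlety is to convert the pointwise control $|\psi(a)|^2 \leq \epsilon/b$ at $b$-sites into an operator inequality sharp enough to capture both the leading power $p^{\pi/\sqrt{\epsilon}}$ \emph{and} the $-\pi^2/b$ correction in the exponent; this requires a quantitative sine-expansion argument in which the rounding to the nearest integer frequency is carefully tracked. Once that lemma is in hand, the remaining steps reduce to routine applications of the min-max principle and the law of large numbers for the Bernoulli potential.
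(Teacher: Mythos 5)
Your lower-bound argument coincides with the paper's: dominate $H$ by the direct sum of Dirichlet Laplacians on the zero-runs via min-max, count sine-wave eigenvalues per interval, and apply a law of large numbers for the density of long zero-runs. The paper is more careful about the almost-sure error: it re-indexes by the number $n$ of intervals (making interval lengths i.i.d.\ geometric), proves an $o(n^\gamma)$ fluctuation bound by an exponential Chebyshev/Borel--Cantelli estimate, and separately shows the longest interval grows only logarithmically; but the substance is what you describe.

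The upper bound is where you stop short of a proof, and you have correctly identified the gap. Everything in \eqref{eq:LifTailsupperbnd} rests on the quantitative lower bound for the energy of a near-sine wave leaking mass onto the bordering $b$-sites, and your ``would yield an inequality of the form $\dots$'' is precisely the lemma that remains to be proved; it is the bulk of the paper's Section~\ref{sec:4}. The paper's route: write the restriction of an eigenfunction to a zero-interval as a phase-shifted, fractional-frequency sine, $f(x)=\tfrac{c}{\sqrt{\ell+1}}\sin\!\bigl(\tfrac{\alpha\pi(x-x_0)}{\ell+1}+t\bigr)$, with boundary values $\delta_L,\delta_R$; show $c\ge 1$; match the phase at both endpoints to get $\alpha\ge[\alpha]+1-\delta\sqrt{\ell+1}$ under the condition $\{\alpha\}\pi+t>\pi/2$ (the complementary case gives a higher energy and is discarded); insert this into $\epsilon\ge 4\sin^2\!\bigl(\tfrac{\alpha\pi}{2(\ell+1)}\bigr)+b(\delta_L^2+\delta_R^2)$; and then \emph{optimize over $\delta$}, which converts the trade-off between kinetic and boundary potential energy into the length constraint $\ell+1\ge w\pi/\sqrt{\epsilon}-\pi^2/b+\dots$. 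The phase-matching case analysis and the optimization over $\delta$ are not routine, so describing the remainder as ``routine applications'' understates what is missing.

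One quantitative slip in the sketch: ``at most roughly $(n+1)\sqrt{\epsilon}/\pi+\pi/\sqrt{b}$ eigenvalues per interval'' has the wrong additive constant. A length shift of $\pi^2/b$ corresponds to an extra $\pi\sqrt{\epsilon}/b$ admissible frequencies, not $\pi/\sqrt{b}$; the latter would correspond to a length shift of $\pi^2/\sqrt{b\epsilon}$, which diverges as $\epsilon\to 0$ and would produce a qualitatively different exponent than the one in \eqref{eq:LifTailsupperbnd}.
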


%% ----------------------------------------------------------------
\section{Upper Bounds for Excited State Energies}
\label{sec:2}

The main result of this section is a set of simple upper bounds on $E_k$ which will be convenient in deriving the lower bound on the density of states (\ref{eq:LifTailslowerbnd}). We obtain them by comparing $H$ to a ``bigger" operator $\tilde{H} = -\Delta + \infty V$, where the potential takes values zero and infinity. The ``eigenstates" for $\tilde{H}$ are discrete sine waves (see (\ref{eq:discretesine})), supported on intervals of zero potential. Let $\{\ell_{\alpha, \beta}\}$ be a collection of the lengths of such intervals, where all $\ell_{\alpha, \beta}$ are equal to the same number $\ell_\alpha$ and the index $\beta$ counts the intervals of equal length. 
The energy of a sine wave supported on an interval of length $\ell_{\alpha, \beta}$ with a frequency $w$ is bounded by $\displaystyle \frac {w^2 \pi^2}{(\ell_{\alpha, \beta}+1)^2}$; these values form natural upper bounds for the eigenvalues of $H$. See \cite{BishopWehr12} for a derivation and a detailed discussion of the discrete sine waves. Note that a frequency of a sine wave supported on an interval of length $\ell_\alpha$ is an integer between $1$ and $\ell_{\alpha}$, thus each interval supports exactly $\ell_\alpha$ different sine waves.

We will use the following notation: for each $\alpha$ and $\beta$ introduce the set
\begin{equation}																														\label{eq:Ualphabeta}
\mathcal{U}_{\alpha, \beta}= \left\{\frac {w^2 \pi^2}{(\ell_{\alpha, \beta}+1)^2}: w = 1, \dots,  \ell_{\alpha, \beta}\right\}
\end{equation}
and denote by $\{U_k\}$ the sequence consisting of all elements of the sets $\mathcal{U}_{\alpha, \beta}$ arranged in the non-decreasing order, including repetitions. $\{U_k\}$ has $L'$ elements, where $L'$ denotes the number of sites where the given realization of the potential is zero.

The following theorem summarizes the above discussion.

\begin{theorem}																										\label{thm:Ekupperbounds}
Let (\ref{eq:eigenvalues}) be the eigenvalues of the operator $H$, corresponding to a specific realization of the potential $V$.
Then 
\begin{equation}																																				\label{eq:Ekupperbounds}
E_k \le U_k, \quad \textrm{ for all } \quad 1 \le k \le L'.
\end{equation} 
\end{theorem}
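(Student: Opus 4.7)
The plan is to apply the Courant--Fischer min-max principle: to prove $E_k \le U_k$, it suffices to exhibit a $k$-dimensional subspace $W_k \subset \mathcal{H}$ on which the Rayleigh quotient of $H$ is everywhere bounded by $U_k$. The natural candidate is the span of the $k$ discrete sine waves whose associated values in $\bigcup_{\alpha,\beta}\mathcal{U}_{\alpha,\beta}$ are the $k$ smallest.

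Concretely, write each zero-potential interval as $I_{\alpha,\beta} = \{a_{\alpha,\beta}+1,\dots,a_{\alpha,\beta}+\ell_{\alpha,\beta}\}$, and for each admissible frequency $w$ define $\phi_{\alpha,\beta,w}$ to be the discrete sine wave $\sin(w\pi(x-a_{\alpha,\beta})/(\ell_{\alpha,\beta}+1))$ on $I_{\alpha,\beta}$, extended by zero. Because $\sin 0=\sin(w\pi)=0$, the extension matches the values at the two neighboring lattice points, so $-\Delta\phi_{\alpha,\beta,w}$ acts on $I_{\alpha,\beta}$ exactly as the Dirichlet Laplacian, giving eigenvalue $\lambda_{\alpha,\beta,w}=4\sin^2\bigl(w\pi/(2(\ell_{\alpha,\beta}+1))\bigr)$. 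The elementary estimate $\sin x\le x$ yields $\lambda_{\alpha,\beta,w}\le w^2\pi^2/(\ell_{\alpha,\beta}+1)^2$, the matching element of $\mathcal{U}_{\alpha,\beta}$. Moreover $V\phi_{\alpha,\beta,w}=0$ since $\phi_{\alpha,\beta,w}$ is supported where $V$ vanishes, so $H\phi_{\alpha,\beta,w}=\lambda_{\alpha,\beta,w}\phi_{\alpha,\beta,w}$ on $I_{\alpha,\beta}$, with at most one nonzero value at each of the two lattice sites bordering $I_{\alpha,\beta}$.

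The crux is orthogonality and $H$-orthogonality of distinct $\phi$'s. Plain $\ell^2$-orthogonality is standard: different intervals have disjoint supports, and on a common interval the sine waves are Dirichlet Laplacian eigenvectors for distinct eigenvalues. For $H$-orthogonality of $\phi_{\alpha,\beta,w}$ and $\phi_{\alpha',\beta',w'}$: if they share an interval, $H\phi_{\alpha',\beta',w'}$ is proportional to $\phi_{\alpha',\beta',w'}$ on that interval and $\phi_{\alpha,\beta,w}$ vanishes elsewhere; if the intervals differ, the only sites where $H\phi_{\alpha',\beta',w'}$ can be nonzero lie in $I_{\alpha',\beta'}$ or are immediately adjacent to it, and those adjacent sites carry $V=b$ (or are the Dirichlet boundary) and hence cannot belong to the other zero-potential interval $I_{\alpha,\beta}$, where $\phi_{\alpha,\beta,w}$ is supported.

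Finally, reorder the triples so that the $U$-values are non-decreasing. Because both $w^2\pi^2/(\ell+1)^2$ and $4\sin^2(w\pi/(2(\ell+1)))$ are strictly increasing in $w/(\ell+1)\in(0,1]$, this ordering also renders the $\lambda$'s non-decreasing, giving $\lambda_j\le U_j$ for every $j$. With $W_k=\mathrm{span}\{\phi_1,\dots,\phi_k\}$, any $\psi=\sum_{j=1}^k c_j\phi_j$ satisfies, by the two orthogonalities,
\[
\frac{\langle \psi, H\psi\rangle}{\|\psi\|^2}=\frac{\sum_j |c_j|^2\lambda_j\|\phi_j\|^2}{\sum_j |c_j|^2\|\phi_j\|^2}\le \lambda_k\le U_k,
\]
and the min-max principle delivers $E_k\le U_k$. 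The most delicate step is checking the $H$-orthogonality at the borders of the zero-potential intervals; this is precisely where the Bernoulli structure, which separates distinct maximal zero intervals by at least one site carrying $V=b$, simplifies the bookkeeping.
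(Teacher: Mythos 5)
Your proof is correct, and it takes the dual form of the variational principle compared to the paper. The paper starts from the maximin characterization $E_k = \max_{T_{k-1}} \min_{\phi \perp T_{k-1}} \langle \phi, H\phi\rangle$ and argues abstractly: restricting the outer maximum to $(k-1)$-dimensional subspaces $S_{k-1}$ of $\mathcal{H}_0$ (functions vanishing where $V=b$) cannot lower the value, because every $T_{k-1}$ sits inside $\mathrm{span}(S_{k-1},\mathcal{H}_0^\perp)$ for a suitable $S_{k-1}$; this bounds $E_k$ by the $k$-th eigenvalue of the compression of $H$ to $\mathcal{H}_0$, which decouples into a direct sum of discrete Dirichlet Laplacians on the zero intervals with eigenvalues $4\sin^2(w\pi/(2(\ell+1)))$. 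You instead invoke the minimax characterization and exhibit an explicit $k$-dimensional trial subspace spanned by sine waves, verifying $\ell^2$- and $H$-orthogonality by hand. These are two faces of the same comparison $H\big|_{\mathcal{H}_0}$ versus $H$. The paper's route avoids checking $H$-orthogonality directly, since the argument stays at the level of subspace inclusions and the compression automatically decouples across intervals; your route is more concrete, and it also makes explicit a small point the paper glides past, namely that sorting the triples $(\alpha,\beta,w)$ by the value $w^2\pi^2/(\ell+1)^2$ also sorts the true eigenvalues $4\sin^2(w\pi/(2(\ell+1)))$, because both are increasing in $w/(\ell+1)\in(0,1]$, which is what lets you conclude $\lambda_j\le U_j$ termwise. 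Your $H$-orthogonality check at the sites bordering each interval is correct and is exactly where the separation of distinct zero intervals by positive-potential sites (or the Dirichlet boundary) is used.
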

\begin{proof}: 
We first obtain an upper bound on $E_k$ through a minimax procedure (inequality (\ref{eq:Ekminimaxest})), and then show how \ref{eq:Ekupperbounds} follows from it. 

For the given realization of the potential, define a set $B = \{1 \le x \le L: V(x) = b\}$ and  let $\mathcal{H}_0$ be the subspace of  $\mathcal{H}$ defined by 
$\mathcal{H}_0 = \{\phi \in \mathcal{H}: \phi(x) = 0 \textrm{ for every } x \in B\}$. In other words, $\mathcal{H}_0$ consists of functions supported on the set of zeros of the potential. 

Next, recall that according to the minimax (or, in this form, maximin) principle (see for example \cite{Lax}),
\begin{equation}
E_k = \max_{T_{k-1}} \min_{\stackrel {\|\phi\| = 1,}{\phi \perp T_{k-1}}} \left\langle \phi, H \phi\right\rangle,
\end{equation} 
where the maximum is taken over all $(k-1)$-dimensional subspaces of $\mathcal{H}$. If $S_{k-1}$ denotes a $(k-1)$-dimensional subspace of $\mathcal{H}_0$, we have
\begin{equation}
\min_{\phi \perp T_{k-1}} \left\langle \phi, H \phi\right\rangle \le \min_{\phi \perp \mathrm{span}(S_{k-1}, \mathcal{H}_0^\perp) }\left\langle \phi, H \phi\right\rangle \quad \textrm{ if } \quad T_{k-1} \subset \mathrm{span}(S_{k-1}, \mathcal{H}_0^\perp).
\end{equation}
Since for every $T_{k-1} \subset \mathcal{H}$ there exists an $S_{k-1}$ such that $T_{k-1} \subset \mathrm{span}(S_{k-1}, \mathcal{H}_0^\perp)$, it follows that 
\begin{equation}
\min_{\phi \perp T_{k-1}} \left\langle \phi, H \phi\right\rangle \le \max_{S_{k-1} \subseteq \mathcal{H}_0} \min_{\phi \perp \mathrm{span}(S_{k-1}, \mathcal{H}_0^\perp) }\left\langle \phi, H \phi\right\rangle.
\end{equation}
Finally, taking $\max_{T_{k-1}}$ of both sides of the above equation and using  $$\{\phi \in \mathcal{H}: \phi \perp \mathrm{span}(S_{k-1}, \mathcal{H}_0^\perp)\} = \{\phi \in \mathcal{H}_0: \phi \perp S_{k-1}\},$$ we obtain the inequality
\begin{equation}																											\label{eq:Ekminimaxest}
E_k \le \max_{S_{k-1} \subseteq \mathcal{H}_0} \min_{\stackrel{\phi \in \mathcal{H}_0, \phi \perp S_{k-1},} {\|\phi\| = 1} }\left\langle \phi, H \phi\right\rangle, \quad  \textrm{ for each } \quad 1 \le k \le L'.
\end{equation}

Notice that the the right-hand side of the inequality (\ref{eq:Ekminimaxest}) is the expression for the $k$-th eigenvalue of the operator $H$ restricted to $\mathcal{H}_0$.  This operator acts as the negative Laplacian on the intervals where the potential is zero. Its eigenfunctions are sine waves, supported on individual intervals. More precisely, a discrete sine function with frequency $w$ supported on an interval $I = (x_0, x_0 +\ell+1)$, with zero boundary conditions at $x_0$ and at $x_0 +  \ell + 1$ is given by
\begin{equation}																					\label{eq:discretesine}
S(x) = \begin{cases} \sin \left(\frac {w \pi (x-x_0)}{\ell + 1}\right), &\quad \textrm{ if } x \in I \\ 0, &\quad \textrm{ if } x \notin I.\end{cases}
\end{equation}
The energy of $S$  is
\begin{equation}
4\sin^2\left( \frac {w \pi}{2 (\ell+1)}\right) \le \frac {w^2 \pi^2}{(\ell+1)^2}.
\end{equation}
Applying this estimate with $\ell = \ell_{\alpha, \beta}$ finishes the proof.
\end{proof}
	
%% ----------------------------------------------------------------
\section{Lower Bound on the Lifschitz Tail}
\label{sec:3}
Here we will use the result of Section~\ref{sec:1} to obtain a lower bound on the density of states - inequality (\ref{eq:LifTailslowerbnd}). We start with an estimate on the counting function. 

\begin{lem} For $\epsilon$ small ($\epsilon \le \pi^2$ is enough), with probability one
\begin{equation}																							\label{eq:NLlowerbnd}
N_L(\epsilon) \ge \frac{q(1 - p^{\frac{\pi \ell_0}{\sqrt{\epsilon}}})p^{\frac{\pi}{\sqrt{\epsilon}}}}{p(1 - p^{\frac{\pi}{\sqrt{\epsilon}}})} \, L + o(L^\gamma), \quad \textrm{ as } L \to \infty,
\end{equation}
where $1/2 < \gamma < 1$ and $\ell_0 (= \ell_0(\omega))$ is the length of the longest interval of zero potential.
\end{lem}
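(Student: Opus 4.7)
The plan is to invoke Theorem~\ref{thm:Ekupperbounds} to bound $N_L(\epsilon)$ from below by the count $|\{k : U_k < \epsilon\}|$, which reduces the problem to a combinatorial/probabilistic question about the lengths of maximal zero-intervals of the random potential.

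First I would decompose the count interval by interval. Each maximal zero-interval of length $\ell$ contributes exactly $n_\ell(\epsilon) := |\{w \in \{1, \dots, \ell\} : w^2 \pi^2 < \epsilon (\ell + 1)^2\}|$ entries to $\{U_k\}$ that lie below $\epsilon$, and with $m = \pi/\sqrt{\epsilon}$ this number equals $\lfloor (\ell+1)/m \rfloor$ up to a harmless correction when $m$ divides $\ell + 1$. Using the layer-cake identity $\lfloor x \rfloor = \sum_{j \ge 1} \mathbf{1}[x \ge j]$ and swapping summations gives
\[
 |\{k : U_k < \epsilon\}| = \sum_\ell N_\ell \, n_\ell(\epsilon) = \sum_{j \ge 1} M_{\kappa(j)},
\]
where $N_\ell$ counts maximal zero-intervals of length exactly $\ell$, $M_\ell$ counts those of length at least $\ell$, and $\kappa(j)$ is an explicit integer threshold close to $jm - 1$. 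Since $M_\ell = 0$ once $\ell > \ell_0$, the outer sum automatically truncates at $j$ of order $\ell_0/m$, which after summing the geometric series produces the $(1 - p^{m \ell_0})$ factor in the claimed formula.

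The heart of the argument is a law-of-large-numbers estimate for each $M_\ell$. Writing $M_\ell = \sum_x \xi_x$ where $\xi_x$ is the indicator that a maximal zero-interval of length $\ge \ell$ starts at position $x$, the variables $\xi_x$ depend on only $\ell + 1$ consecutive values of the potential and hence form an $(\ell + 1)$-dependent sequence. A direct second-moment computation yields $\mathrm{Var}(M_\ell) = O(L \ell)$, and Chebyshev combined with Borel--Cantelli then delivers
\[
 M_\ell = \mathbb{E}[M_\ell] + o(L^\gamma)
\]
almost surely, for any $\gamma > 1/2$. A straightforward count of starting positions shows $\mathbb{E}[M_\ell] = p^\ell + (L - \ell)\, q\, p^\ell \sim L\, q\, p^\ell$ as $L \to \infty$.

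Substituting these expectations into $\sum_{j \ge 1} M_{\kappa(j)}$ and evaluating the geometric series in $p^m$ produces the quoted expression $\frac{q(1 - p^{m \ell_0})\, p^m}{p(1 - p^m)}\, L$ modulo an error absorbed into $o(L^\gamma)$. The main obstacle is that the relevant range of $j$ has size of order $\log L$ almost surely (since $\ell_0 \sim \log L/\log(1/p)$), so the per-$\ell$ concentration estimate must be combined with a union bound over these thresholds; this is harmless because $\log L \cdot L^{1/2 + \eta} = o(L^\gamma)$ for every $\gamma > 1/2$.
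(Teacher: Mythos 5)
Your overall architecture matches the paper's: reduce via Theorem~\ref{thm:Ekupperbounds} to counting $U_k<\epsilon$, regroup the double sum by a frequency index ($j$ for you, $w$ in the paper), reduce to counting intervals of length exceeding a threshold, and then apply a law-of-large-numbers estimate to the relevant empirical counts and sum the geometric series. The one genuine structural difference is that you remain in the fixed-$L$ model and write $M_\ell$ as a sum of $(\ell+1)$-dependent indicators, while the paper passes to the fixed-$n$ (random-$L$) model where the interval lengths become i.i.d.\ geometric, so that $S_n(Y)$ is literally a sum of $n$ i.i.d.\ Bernoulli variables. Your choice avoids the informally justified "the two systems are equivalent in the infinite-volume limit'' step, which is a virtue, but it also changes what concentration inequality you have available, and that is where the proposal breaks down.

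The gap is in the claim that "Chebyshev combined with Borel--Cantelli then delivers $M_\ell=\mathbb{E}[M_\ell]+o(L^\gamma)$ almost surely, for any $\gamma>1/2$.'' With $\mathrm{Var}(M_\ell)=O(L\ell)$, Chebyshev gives
\[
P\bigl[\,|M_\ell-\mathbb{E}M_\ell|>\eta L^\gamma\,\bigr]\;\le\;\frac{C\,L\,\ell}{\eta^2 L^{2\gamma}}\;=\;O\!\left(\ell\,L^{1-2\gamma}\right),
\]
and for $\gamma\in(1/2,1)$ the exponent $1-2\gamma$ lies in $(-1,0)$, so this is \emph{not} summable in $L$ (and the union over the $O(\log L)$ thresholds $\kappa(j)$ only makes it worse). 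Borel--Cantelli therefore does not apply in the direct way you invoke it. This is precisely why the paper does not use plain Chebyshev: it uses the exponential Chebyshev inequality, exploiting that in the fixed-$n$ model $S_n$ is a sum of i.i.d.\ Bernoulli$(\rho)$'s with moment generating function $M(t)=\rho e^t+(1-\rho)$, and obtains a tail of order $\exp\!\bigl(-c\,n^{2\gamma-1}\bigr)$, which \emph{is} summable for every $\gamma>1/2$. To repair your version you would need one of: an exponential concentration inequality for the $(\ell+1)$-dependent sum $M_\ell$ (say by partitioning $\{1,\dots,L\}$ into $O(\ell)$ residue classes modulo $\ell+1$ and applying a Chernoff/Hoeffding bound to each i.i.d.\ class, then union-bounding), or a subsequence-plus-monotonicity argument exploiting that $M_\ell$ is nondecreasing in $L$, or high-moment Markov ($2k$-th moment with $k$ large enough that $k(1-2\gamma)<-1$). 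None of these is stated, and the step as written is false.

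Two smaller points worth noting. First, you should keep the factor $p^\ell$ in the variance bound ($\mathrm{Var}(M_\ell)=O(L\ell\,p^\ell)$); for $\ell$ close to $\ell_0\sim\log L/\log(1/p)$ this factor is $\approx 1/L$ and makes the bound much better, but for small $\ell$ (in particular $\ell\approx m-1=\pi/\sqrt\epsilon-1$, which is fixed as $L\to\infty$) it does not help, and the summability failure described above persists. Second, when you say the union bound over $O(\log L)$ thresholds is "harmless because $\log L\cdot L^{1/2+\eta}=o(L^\gamma)$ for every $\gamma>1/2$,'' the quantifier is backwards: this requires $1/2+\eta<\gamma$, so one fixes $\gamma$ first and then chooses $\eta$ small, which is fine once you state it that way, but as written it is not true for every $\gamma>1/2$ with a fixed $\eta$.
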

\begin{proof}
We start with some pointwise estimates on $N_L(\epsilon)$, that follow from pointwise bounds on eigenvalues (Theorem~\ref{thm:Ekupperbounds}).
Keeping only the first $L'$ eigenvalues and dropping the rest will make the counting function smaller:
\begin{align*}
N_L(\epsilon) = \# \{E_k: E_k < \epsilon\} &\ge \#\{ E_k, 1\le k \le L': E_k < \epsilon\}\\
&\ge \#\{U_k: U_k < \epsilon\}.
\end{align*}
Here the second inequality is a direct consequence of the bounds (\ref{eq:Ekupperbounds}). It follows from the definition of $U_k$ (see equation (\ref{eq:Ualphabeta}) and the paragraph which contains it) that the value of the right-hand side of the above inequality is equal to the number of pairs $(\ell_{\alpha, \beta}, w)$ such that
\begin{equation}																									\label{eq:energyepsilonineq}
 \frac {w^2 \pi^2}{(\ell_{\alpha, \beta}+1)^2} < \epsilon.
\end{equation}
It is convenient to group such pairs according to frequencies $w$ rather than to the interval lengths, i.e. for each fixed $w$ we count the number of intervals $I$ with lengths satisfying (\ref{eq:energyepsilonineq}) (the length $|I|$ of an interval is one of the numbers $\ell_{\alpha, \beta}$). In other words, we ask how many intervals can support a state with frequency $w$ and energy less than $\epsilon$. This regrouping leads to the inequality
\begin{equation}																											\label{eq:NLineq1}
N_L(\epsilon) \ge \sum_{w = 1}^{\ell_0} \# \left\{I: \frac {w^2 \pi^2}{(|I|+1)^2} < \epsilon\right\} = \sum_{w = 1}^{\ell_0} \# \left\{I:|I| > \frac{w \pi}{\sqrt{\epsilon}} - 1\right\}.
\end{equation}

So far, we have been considering the system with a fixed length $L$.  In such system, the number of the intervals of zero potential (separated by one or more sites at which $V(x) = b$) is random.  Throughout the paper, the word "interval" means a set of consecutive sites where the potential function is equal to zero.  We now switch to a more convenient system, where the number $n$ of these intervals is fixed and, as a result, the size $L$ of the system becomes random.  In the infinite volume limit ($L, n \to \infty$) the two systems are equivalent.  This approach was already used, and is explained in detail in  \cite{BishopWehr12}.

Let us denote by $S_n(Y)$ the random variable counting the number of intervals whose length is larger than $Y$ in a system with the total number of intervals equal $n$. By the Glivenko-Cantelli theorem (see \cite{Durrett}, Theorem 2.4.7), we have
\begin{equation}																											\label{eq:NLineq2}
S_n\left(\frac{w \pi}{\sqrt{\epsilon}} - 1\right) = \# \left\{I:|I| > \frac{w \pi}{\sqrt{\epsilon}} - 1\right\} = P\left[|I| > \frac{w \pi}{\sqrt{\epsilon}} - 1\right]n + R(n),
\end{equation}
where the remainder $R(n) = o(n)$ as $n \to \infty$ uniformly in $\frac{w \pi}{\sqrt{\epsilon}}$. To finish the proof, however, we need more precise information about the behavior of the remainder $R$ at infinty. In fact, we will show
\begin{equation}																			\label{eq:Rnlimit}
\lim_{n\to\infty} \frac {R(n)}{n^{\gamma}} = 0 \quad a.s.
\end{equation}
for $\gamma \in (\frac 1 2 ,  1)$, uniformly in $w$ and $\epsilon$. Indeed, for a positive number $\eta$ let us estimate
$
P\left[\frac {R(n)}{n^\gamma} > \eta\right].
$
Equation (\ref{eq:NLineq2}) implies
\begin{equation}																							\label{eq:etaprob}
P\left[\frac {R(n)}{n^\gamma} > \eta\right] = P\left[\frac {S_n - \rho n}{n^\gamma} > \eta\right] = P\left[S_n > \rho n + \eta n^\gamma\right],
\end{equation}
where we use a simplified notation $S_n = S_n\left(\frac{w \pi}{\sqrt{\epsilon}} - 1\right)$ and $\rho = P\left[|I| > \frac{w \pi}{\sqrt{\epsilon}} - 1\right]$.
Applying the exponential Chebyshev inequality, we get
$$
P[S_n \geq \rho n + \eta n^{\gamma}] \leq \exp[-t(\rho n + \eta n^{\gamma})]M(t)^n,
$$
where 
$$
M(t) = \rho e^t + (1-\rho)
$$
is the moment generating function of the Bernoulli distribution with the parameter $\rho$.  Expanding the exponential to second order, we get
$$
\left[M(t)\right]^n \le e^{n\ln\left(1 + \rho (t + \frac e 2 t^2)\right)} \le e^{n\rho (t + \frac e 2 t^2)},
$$
whence for the probability (\ref{eq:etaprob}) we obtain
$$
P\left[\frac {R(n)}{n^\gamma} > \eta\right] \leq \exp\left[-t\eta n^{\gamma}+n\rho \frac e 2 t^2\right] \le \exp\left[-\frac {\eta^2 n^{2\gamma -1}}{2 \rho e}\right].
$$
Since the right-hand side of the above inequality is summable in $n$, by the First Borel-Cantelli lemma we get that for $n$ larger than a certain $n_0$ (depending on $\omega$)
$$
\frac{R(n)} {n^{\gamma}} \le \eta.
$$
Similarly, one shows that for large $n$ 
$$
\frac {R(n)}{n^{\gamma}} \ge -\eta,
$$
which proves (\ref{eq:Rnlimit}).

Combining (\ref{eq:NLineq1}) and (\ref{eq:NLineq2}) with (\ref{eq:Rnlimit}), we obtain $a.s.$ as $n \to \infty,$
\begin{equation}																													\label{eq:NLineq}	
N_L(\epsilon) \ge \sum_{w = 1}^{\ell_0} P\left[|I| > \frac{w \pi}{\sqrt{\epsilon}} - 1\right]n + o(n^{\gamma}) = \sum_{w = 1}^{\ell_0} P\left[|I| > \left\lfloor \frac{w \pi}{\sqrt{\epsilon}}\right\rfloor - 1\right]n + o(n^{\gamma}),
\end{equation}
where the last term estimates $R(n)$ multiplied by $\ell_0$ and, to accommodate the (at most logarithmic, see Proposition~\ref{prop:ell0length}) growth of $\ell_0$ we have to increase the value of $\gamma$ slightly.
Using that in the system with a fixed number of intervals the interval lengths are independent random variables with geometric distribution, it is easy to compute that
\begin{equation}																											\label{eq:prob1}
P\left[|I| > \left\lfloor \frac{w \pi}{\sqrt{\epsilon}}\right\rfloor - 1\right] = p^{ \left\lfloor \frac{w \pi}{\sqrt{\epsilon}}\right\rfloor -1}.
\end{equation}
Therefore,
\begin{align}
\label{eq:CFlowerbnd} N_L(\epsilon) &\ge \sum_{w = 1}^{\ell_0} p^{ \left\lfloor \frac{w \pi}{\sqrt{\epsilon}}\right\rfloor -1} p\,qL + o(L^\gamma) = \sum_{w = 1}^{\ell_0} p^{\frac{w \pi}{\sqrt{\epsilon}} -1} p^{\left\lfloor \frac{w \pi}{\sqrt{\epsilon}}\right\rfloor - \frac{w \pi}{\sqrt{\epsilon}}} p\,qL + o(L^\gamma)\\
\notag &\ge \sum_{w = 1}^{\ell_0} p^{\frac{w \pi}{\sqrt{\epsilon}}} p^{-2} p\,qL + o(L^\gamma)
= \frac {p^{\frac{\pi}{\sqrt{\epsilon}}}(1 - p^{\frac{\pi \ell_0}{\sqrt{\epsilon}}})}{1 - p^{\frac{\pi}{\sqrt{\epsilon}}}}p^{-1} qL + o(L^\gamma),
\end{align}
finishing the proof of (\ref{eq:LifTailslowerbnd}).
\end{proof}

\begin{proposition} 													\label{prop:ell0length}
Let $\ell_0$ be the random variable, measuring length of the longest interval of zero potential. Then
\begin{equation}																									\label{eq:elllimitinfty}
\lim_{n\to\infty} \ell_0 = \infty \quad a.s.
\end{equation}
and
\begin{equation}																									\label{eq:elllimit}
\lim_{n \to \infty} \frac {\ell_0}{n^\delta} = 0 \quad a.s.
\end{equation}
for any $\delta > 0$. 
\end{proposition}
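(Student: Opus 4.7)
The plan is to exploit the fact that, in the ``fixed-number-of-intervals'' framework used throughout Section~\ref{sec:3}, the lengths $|I_1|,\dots,|I_n|$ are i.i.d.\ geometric random variables with tail $P(|I_k| \geq m) = p^{m-1}$, as was already recorded in (\ref{eq:prob1}). The quantity $\ell_0$ is just the maximum of these $n$ variables, so both assertions are the classical ``$\Theta(\log n)$-behaviour'' of maxima of i.i.d.\ geometric random variables. I would prove each by a short first Borel--Cantelli argument, closely mirroring the one used for $R(n)$ in the proof of the previous lemma.

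For (\ref{eq:elllimitinfty}), I would fix a positive integer $M$ and estimate
\[
P(\ell_0 < M) \;=\; (1 - p^{M-1})^n \;\leq\; \exp\bigl(-n\,p^{M-1}\bigr),
\]
which is summable in $n$. The first Borel--Cantelli lemma then gives $\ell_0 \geq M$ eventually almost surely, and intersecting the corresponding full-measure events over the countable family $M \in \bN$ yields $\ell_0 \to \infty$ almost surely.

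For (\ref{eq:elllimit}), I would fix $\delta > 0$ and apply the union bound
\[
P(\ell_0 > n^{\delta}) \;\leq\; n\, P(|I_1| > n^{\delta}) \;=\; n\,p^{\lfloor n^{\delta}\rfloor},
\]
which decays super-polynomially in $n$ and is in particular summable. A second application of the first Borel--Cantelli lemma then yields $\ell_0 \leq n^{\delta}$ for all sufficiently large $n$, almost surely.

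There is no serious obstacle here: both steps reduce to the exponential tail of the geometric distribution. The only point worth being careful about is to work strictly inside the fixed-$n$ framework, where the interval lengths are genuinely i.i.d.; outside that framework one would have to contend with a random number of intervals, which would only obscure the matter. This proposition is invoked in the previous lemma precisely to guarantee that the factor $\ell_0$ multiplying $R(n)$ after the regrouping in (\ref{eq:NLineq}) can be absorbed into the error $o(n^\gamma)$ by enlarging $\gamma$ by an arbitrarily small amount while remaining in the admissible range $(1/2,1)$.
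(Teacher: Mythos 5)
Your proof is correct in substance, but it takes a noticeably more explicit and more rigorous route than the paper. The paper's own proof simply asserts (without computation) the distributional limit
\[
\lim_{n \to \infty} P\!\left[\ell_0 > \frac{\log n}{\log (1/p)} - \frac{\log y}{\log (1/p)}\right] = 1 - e^{-y},
\]
i.e.\ a Gumbel-type extreme-value statement saying that $\ell_0$ concentrates around $\log n/\log(1/p)$, and then declares both a.s.\ conclusions ``straightforward consequences.'' That is terse: convergence in distribution does not by itself yield almost-sure convergence, so the reader is tacitly expected to supply something like a Borel--Cantelli argument (or a monotonicity/subsequence argument) anyway. You instead bypass the distributional limit entirely and prove the two almost-sure statements directly from the geometric tail $P(|I|\ge m)=p^{m-1}$ together with the first Borel--Cantelli lemma. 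This is more elementary, more self-contained, and arguably closes a small gap the paper leaves open; the only thing it loses relative to the paper's statement is the sharp $\Theta(\log n)$ description of $\ell_0$, which is however never needed elsewhere (the applications in (\ref{eq:NLlowerbnd}) and (\ref{eq:NLupperbnd}) use exactly (\ref{eq:elllimitinfty}) and (\ref{eq:elllimit}), not the distributional limit).

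One small logical point to tighten: from $\sum_n n\,p^{\lfloor n^{\delta}\rfloor}<\infty$ and Borel--Cantelli you conclude $\ell_0\le n^{\delta}$ eventually almost surely, which only gives $\limsup_n \ell_0/n^{\delta}\le 1$, not that the limit is $0$. To finish, apply the same argument with a smaller exponent $\delta'<\delta$ to get $\ell_0\le n^{\delta'}$ eventually a.s., whence $\ell_0/n^{\delta}\le n^{\delta'-\delta}\to 0$. Since $\delta>0$ was arbitrary, this is a one-line repair, but it should be stated.
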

\begin{proof} 
%We will only show \ref{eq:elllimit}, \ref{eq:elllimitinfty} can be proven similarly.
%\begin{equation}
%P\left[\ell_0 > n^{\delta/2} \right] = 1 - P\left[\ell_0 \le n^{\delta/2} \right] = 1 - \left(P\left[|I| \le n^{\delta/2} \right]\right)^n =  1 - \left(1-p^{n^{\delta/2}}\right)^n \le n p^{n^\delta}.
%\end{equation}
%Since the right-hand side is summable in $n$, by the First Borel-Cantelli lemma we have that with probability one for $n$ larger than a certain $n_0$ (depending on $\omega$)
%$$
%\ell_0 \le n^{\delta/2},
%$$ proving \ref{eq:elllimit}.
Both statements follow from the fact that the length of the largest interval grows logarithmically with the size of the system. Specifically, 
a simple calculation shows that for any $0 \le y \le \infty$,
\begin{equation}																										\label{eq:loggrowth}
\lim_{n \to \infty} P\left[\ell_0 > \frac{\log n}{\log \frac 1 p} - \frac{\log y}{\log \frac 1 p}\right] = 1 - e^{-y}.
\end{equation}
Both (\ref{eq:elllimitinfty}) and (\ref{eq:elllimit}) are straightforward consequences of (\ref{eq:loggrowth}). 
\end{proof}

\begin{proof}[Proof of (\ref{eq:LifTailslowerbnd}).]
As $L \to \infty$, $\ell_0 \to \infty$ almost surely, therefore
$$\lim_{L \to \infty} p^{\frac{\pi \ell_0}{\sqrt{\epsilon}}} = 0, \quad a.s.$$
With this observation, (\ref{eq:NLlowerbnd}) implies (\ref{eq:LifTailslowerbnd}) immediately.
\end{proof}

%% ----------------------------------------------------------------
\section{Upper Bound on the Lifschitz Tail}
\label{sec:4}

To prove an upper bound on the integrated density of states $k(\epsilon)$, we are first going to prove that low lying excited states are mostly concentrated on the sites with zero potential, i.e. on the intervals.  We will then derive a lower bound on the length of intervals that can support an eigenstate with energy smaller than $\epsilon$.  The total number of such eigenstates will then be estimated using the geometric distribution of individual intervals, similarly to a calculation in section 2 (equations (\ref{eq:NLineq1})--(\ref{eq:CFlowerbnd})).

%Lemma 17 of the dissertation

Consider an eigenstate $f$.  Its restriction to an interval of zero potential $I = (x_0, x_0 +\ell + 1)$ of length $\ell$ has the form
\begin{equation}															\label{eq:exstate}
f(x) = \frac{c}{\sqrt{\ell + 1}} \sin\left(\frac {\alpha \pi (x - x_0)}{\ell + 1} + t\right), \quad x \in I,
\end{equation}
with the boundary values that agree with (\ref{eq:exstate}):
\begin{equation}															\label{eq:exstatebnd}
f(x) = 
\begin{cases} 
\delta_L = \frac{c}{\sqrt{\ell + 1}} \sin\left(t\right), & x = x_0,\\
\delta_R = \frac{c}{\sqrt{\ell + 1}} \sin\left(\alpha \pi + t\right), & x = x_0 + \ell + 1.
\end{cases}
\end{equation}
The normalization constant $c$ is chosen in such a way that $\|f\|_{l_2(I)} = 1$. The phase shift $t$ takes values between $-\pi/2$ and $\pi/2$ and is determined by the left boundary value. Note that, since the values of $f$ on the boundary of the interval are not necessarily zero, $\alpha \ge 0$ does not have to be an integer and $f$ is a distortion of a Dirichlet eigenfunction of the Laplacian on $I$.  

The following properties of $f$ will be useful for the future estimates.
\begin{lemma}
Let $f$ be defined by (\ref{eq:exstate}). Then 
\begin{equation}										\label{eq:cge1}
c \ge 1.
\end{equation}
Furthermore, let $\delta = \max \{|\delta_L|, |\delta_R|\}$. Then 
\begin{equation}										\label{eq:alphaineq}
\alpha \ge [\alpha] + 1 - \delta \sqrt{\ell + 1} \quad \textrm{ if } \quad \{\alpha\} \pi + t > \pi/2, 
\end{equation}
%and
%\begin{equation}										\label{eq:alphaineq1}
%\alpha \ge [\alpha] - \delta \sqrt{\ell + 1}, \quad \textrm{ if } \quad \{\alpha\} \pi + t \le \pi/2, 
%\end{equation}
where $[\alpha]$ stands for the integer part of $\alpha$ (the largest integer not exceeding $\alpha$) and $\{\alpha\} = \alpha - [\alpha]$ is its fractional part.
\end{lemma}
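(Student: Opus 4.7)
The two inequalities are natural to treat in order, with the first feeding into the second. For $c \ge 1$, I would substitute the explicit sine formula (\ref{eq:exstate}) into the normalization $\|f\|_{\ell^2(I)} = 1$ to obtain
\[
1 \;=\; \frac{c^2}{\ell+1}\sum_{k=1}^{\ell}\sin^2\!\left(\frac{\alpha\pi k}{\ell+1} + t\right),
\]
and then invoke the trivial bound $\sin^2 \le 1$ termwise. The sum has $\ell$ terms and is at most $\ell$, forcing $c^2 \ge (\ell+1)/\ell > 1$.

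The proof of (\ref{eq:alphaineq}) rests on the observation that both boundary values $\delta_L$ and $\delta_R$ are controlled by $\delta$, and that using the two simultaneously is what lets one eliminate the free phase $t$. The workhorse estimate is the concavity inequality $|\sin x| \ge (2/\pi)|x|$ for $x \in [-\pi/2,\pi/2]$, which follows from $\sin$ being concave on $[0,\pi/2]$ with $\sin 0=0$, $\sin(\pi/2)=1$. Since $t \in (-\pi/2,\pi/2)$ and $|\delta_L| = (c/\sqrt{\ell+1})|\sin t|$, this gives $|\sin t| \le (\sqrt{\ell+1}/c)\delta \le \sqrt{\ell+1}\,\delta$ (using $c\ge 1$), hence $|t| \le (\pi/2)\,\delta\sqrt{\ell+1}$.

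For the right boundary I would introduce the reduced phase $s = t - (1-\{\alpha\})\pi$, so that $\alpha\pi + t = ([\alpha]+1)\pi + s$ and $|\sin(\alpha\pi + t)| = |\sin s|$. The hypothesis $\{\alpha\}\pi + t > \pi/2$ rearranges directly to $s > -\pi/2$, while $s \le t < \pi/2$ gives $s < \pi/2$; thus $s$ lies in the regime where the concavity bound applies, and a second application yields $|s| \le (\pi/2)\,\delta\sqrt{\ell+1}$. Since $(1-\{\alpha\})\pi = t - s$, the triangle inequality then gives $(1-\{\alpha\})\pi \le |t| + |s| \le \pi\,\delta\sqrt{\ell+1}$, equivalent to $\alpha \ge [\alpha] + 1 - \delta\sqrt{\ell+1}$.

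The only mildly subtle step is verifying that $s \in (-\pi/2,\pi/2)$ under the stated hypothesis, which is exactly the role played by the assumption $\{\alpha\}\pi + t > \pi/2$ (it is what rules out the ``other half-cycle'' where no such linear lower bound on $|\sin|$ is available). The main conceptual point is that neither boundary alone controls the distance from $\alpha$ to the next integer: each gives a $t$-dependent estimate, and it is the cancellation through the triangle inequality applied to $(1-\{\alpha\})\pi = t - s$ that produces the clean constant $1$ in front of $\delta\sqrt{\ell+1}$.
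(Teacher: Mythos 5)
Your proof is correct and follows essentially the same route as the paper: the normalization bound for $c\ge 1$ is identical, and your Jordan-type estimate $|\sin x|\ge\frac{2}{\pi}|x|$ on $(-\pi/2,\pi/2)$ applied to $t$ and to the reduced phase $s=t-(1-\{\alpha\})\pi$ is just a repackaging of the paper's two $\arcsin$ terms together with $\arcsin x\le\frac{\pi}{2}x$. The triangle-inequality step matches the paper's addition of the two $\arcsin$ bounds after solving the boundary equations for $t$, so the two arguments are the same in substance.
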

\begin{proof} The first statement follows immediately from the observation that: 
\begin{equation}
1 = \|f\|_{l_2(I)}^2 = \frac{c^2}{\ell} \sum_{x = x_0 + 1}^{x_0 + \ell} \sin^2\left(\frac {\alpha \pi (x - x_0)}{\ell + 1} + t\right) \le \frac{c^2}{\ell}\, \ell = c^2.
\end{equation}
To prove inequality (\ref{eq:alphaineq}), let us solve both equations (\ref{eq:exstatebnd}) simultaneously with respect to $t$. Since $t$ is between $-\pi/2$ and $\pi/2$, the first equation becomes 
\begin{equation}												\label{eq:tL}
t = \arcsin \left(\frac {\delta_L \sqrt{\ell + 1}}{c}\right),
\end{equation}
and since $ \{\alpha\} \pi + t > \pi/2$, from the second equation we obtain the following expression for $t$:
\begin{align}
\label{eq:tR1}&t = (-1)^{[\alpha] +1}\arcsin \left(\frac {\delta_R \sqrt{\ell + 1}}{c}\right) - \{\alpha\} \pi + \pi.
%\label{eq:tR2}&t = (-1)^{[\alpha]}\arcsin \left(\frac {\delta_R \sqrt{\ell + 1}}{c}\right) - \{\alpha\} \pi, &\quad \textrm{ if } \{\alpha\} \pi + t \le \pi/2.
\end{align}
Next, set the right-hand sides of (\ref{eq:tL}) and (\ref{eq:tR1}) equal to each other to get
\begin{align*}
\alpha &= [\alpha] + 1 - \frac 1 \pi \left(\arcsin \left(\frac {\delta_L \sqrt{\ell + 1}}{c}\right) + (-1)^{[\alpha]}\arcsin \left(\frac {\delta_R \sqrt{\ell + 1}}{c}\right)\right)\\
& \ge [\alpha] + 1 - \frac 1 \pi \left(\arcsin \left(\frac {|\delta_L| \sqrt{\ell + 1}}{c}\right) + \arcsin \left(\frac {|\delta_R| \sqrt{\ell + 1}}{c}\right)\right)\\
& \ge [\alpha] + 1 - \frac 2 \pi \arcsin \left(\frac {\delta \sqrt{\ell + 1}}{c}\right).
\end{align*}
Since for positive $x$,
$$\arcsin x \le \frac \pi 2 x,$$
we have
\begin{equation}
\alpha \ge [\alpha] + 1 - \frac {\delta \sqrt{\ell + 1}}{c} \ge [\alpha] + 1 - \delta \sqrt{\ell + 1},
\end{equation}
where in the last inequality we used (\ref{eq:cge1}). This finishes the proof.

%In the second case a similar sequence of inequalities results in 
%\begin{equation}
%\alpha \ge [\alpha] - \frac {\delta \sqrt{\ell + 1}}{c} \ge [\alpha] - \delta \sqrt{\ell + 1},
%\end{equation}

\end{proof}

The purpose of lemma is to provide lower bounds on $\alpha$ to be used in lower bounds on the energies of states.  The goal is to find optimal lower bounds on energy for a sine wave with approximate frequency $w$.  This approximate frequency is determined by the branch of $\arcsin x$.  In the case where $\{\alpha\}\pi + t > \pi/2$, the function in (\ref{eq:exstate}) is a transformation of the sine wave with integer frequency $w = [\alpha]+1$ that is stretched and shifted.  This case describes a state with energy strictly less than, but close to, the energy of the sine wave with frequency $w$.  In the case where $\{\alpha\}\pi + t \leq \pi/2$, the function in (\ref{eq:exstate}) is a transformation of the sine wave with integer frequency $w = [\alpha]$ that is compressed and shifted.  This case describes a state with energy greater than the energy of the state with frequency $w$.  This energy is not less than the energy of state with integer frequency $w$ and not a lower bound.  Thus, it must be the case that $\{\alpha\}\pi + t > \pi/2$, which will be the case considered in Lemma 2.

%In order to apply the above lemma in the future estimates we need to make a connection between the condition that $\delta \sqrt{\ell + 1} \to 0$ and the energy of the corresponding sine wave. Specifically, for each fixed $\epsilon$ introduce a set of pairs
%\begin{equation}																			\label{eq:setofpairs}
%S(\epsilon) = \{(\delta, \ell): \textrm{ there exists an } f \textrm{ of the form \ref{eq:exstate} such that } \max(\delta_L, \delta_R) = \delta \textrm{ and } \left\langle Hf,f \right\rangle = \epsilon\}.
%\end{equation}
%We will show that $\epsilon \to 0$ implies $\delta \sqrt{\ell + 1} \to 0$.
%\begin{lem} Let $S(\epsilon)$ be defined by \ref{eq:setofpairs}. Then
%\begin{equation}
%\lim_{\epsilon \to 0} \sup_{(\delta, \ell) \in S(\epsilon)} \delta \sqrt{\ell + 1} = 0.
%\end{equation}
%\end{lem}
%\begin{proof} 
%For a fixed $\epsilon$, consider an $\ell$ such that there exists a $\delta$ so that $(\delta, \ell) \in S(\epsilon)$. 

%Consider a constant function (a sine wave corresponding to $\alpha = 0$) $f(x) = \delta$ for all $x \in I$. Then 
%\begin{equation}
%2 b \delta \le \epsilon \quad \text { or } \quad  \delta \le \epsilon/(2b).
%\end{equation}
%Also, since $f$ is normalized to $1$, we have
%\begin{equation}
%\delta^2 \ell = 1 \quad \textrm{ or } \delta = \frac 1 {\sqrt{\ell}}.
%\end{equation}

%Take a pair $(\delta, \ell) \in S(\epsilon)$
%\end{proof}

\begin{lemma}
Let $I = (x_0 , x_0 +\ell + 1)$ be an interval of length $\ell$ and let $f$ be an eigenstate supported on $I$ of the form (\ref{eq:exstate}).
Then for $\ell + 1 \ge \pi^2/b$,
\begin{equation}																	\label{eq:islandlengthineq}
\ell + 1 \ge \frac{w\pi}{\sqrt{\epsilon}}- \frac {\pi^2}{b} + (w + \frac 1 w) O(\sqrt{\epsilon}) + O(\epsilon), \quad \textrm{ as } \epsilon \to 0,
\end{equation}
where $w = [\alpha] + 1$ and $\epsilon$ is the energy of $f$.\\
\end{lemma}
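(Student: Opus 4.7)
The plan is to express the eigenvalue relation explicitly in terms of the frequency parameter $\alpha$ and the interval length $\ell+1$, then invoke the lower bound $\alpha \ge w - \delta\sqrt{\ell+1}$ from the preceding lemma, and finally close the estimate by bounding the boundary amplitude $\delta$. Since $V\equiv 0$ on $I$ and $Hf=\epsilon f$ globally, the function $f$ satisfies $-\Delta f = \epsilon f$ at every interior site of $I$; substituting the sine ansatz (\ref{eq:exstate}) yields the exact relation $\epsilon = 4\sin^2(\alpha\pi/(2(\ell+1)))$. Inverting and applying the Taylor expansion $2\arcsin(\sqrt{\epsilon}/2) = \sqrt{\epsilon} + \epsilon^{3/2}/24 + O(\epsilon^{5/2})$ gives
\[
\ell+1 \;=\; \frac{\alpha\pi}{2\arcsin(\sqrt{\epsilon}/2)} \;=\; \frac{\alpha\pi}{\sqrt{\epsilon}}\left(1 - \frac{\epsilon}{24} + O(\epsilon^2)\right).
\]

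By the paragraph following the preceding lemma, only the case $\{\alpha\}\pi + t > \pi/2$ is relevant for lower energy bounds, so $\alpha \ge w - \delta\sqrt{\ell+1}$ with $w = [\alpha]+1$. Substituting and expanding,
\[
\ell+1 \;\ge\; \frac{w\pi}{\sqrt{\epsilon}} \;-\; \frac{\pi\,\delta\sqrt{\ell+1}}{\sqrt{\epsilon}} \;-\; \frac{w\pi\sqrt{\epsilon}}{24} \;+\; O\bigl(w\epsilon^{3/2}\bigr).
\]
To reach the stated inequality, it remains to show that $\pi\,\delta\sqrt{\ell+1}/\sqrt{\epsilon} \le \pi^2/b$ up to lower-order corrections. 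This is where the hypothesis $\ell+1\ge\pi^2/b$ enters. The approach is to use the eigenvalue equation at a boundary $b$-site, $(2+b-\epsilon)\delta_L = f(x_0-1) + f(x_0+1)$, combined with the decay of the $\ell^2$ eigenstate $f$ in the adjacent barrier region (via the larger root $\lambda_+$ of $\lambda^2-(2+b-\epsilon)\lambda+1=0$, giving $\delta_L = f(x_0+1)/\lambda_+$), together with the explicit value $f(x_0+1) = (c/\sqrt{\ell+1})\sin(\alpha\pi/(\ell+1)+t)$, whose leading amplitude is of order $\sqrt{\epsilon}/\sqrt{\ell+1}$. This produces a bound of the form $\delta\sqrt{\ell+1}\le \pi\sqrt{\epsilon}/b + \text{errors}$, which converts the middle term above into the desired $-\pi^2/b$. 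The residual $(w+1/w)O(\sqrt{\epsilon})+O(\epsilon)$ packages the next-order terms from the Taylor expansions of $\sin$ and $\arcsin$, where the $1/w$ factor absorbs ratios (such as $\alpha/w$ and the self-consistent correction in $t$) that become large only when $w$ is of order one.

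The main obstacle is nailing down the sharp constant $\pi^2$ in the estimate on $\delta\sqrt{\ell+1}$: the crude potential-energy bound $b\delta^2 \le \epsilon\|f\|^2$ is far too lossy after multiplication by $\sqrt{\ell+1}$, so the argument must carefully combine the explicit sinusoidal structure of $f$ on $I$, the self-consistent relation between $t$ and $\delta_L$, and the precise value of $\lambda_+$ in the adjacent barrier. The assumption $\ell+1\ge\pi^2/b$ is precisely the threshold at which the sine wave fits cleanly into $I$ and this boundary estimate becomes sharp; below it, the correction would overwhelm the leading term and the inequality would be vacuous.
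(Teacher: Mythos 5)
Your approach diverges from the paper's at the outset and, as you yourself flag, has a genuine gap at the crucial step of bounding $\delta$. The paper does \emph{not} try to determine the boundary amplitude $\delta$ for the actual eigenfunction. Instead it starts from the inequality $\epsilon \ge 4\sin^2\bigl(\alpha\pi/(2(\ell+1))\bigr) + b\delta^2$, substitutes $\alpha \ge w - \delta\sqrt{\ell+1}$, applies $\sin^2 x \ge x^2(1+O(\epsilon))$ to get $\epsilon \ge (1 - \delta\sqrt{\ell+1}/w)^2\,w^2\pi^2(\ell+1)^{-2}(1+O(\epsilon)) + b\delta^2$, and then \textbf{minimizes the right-hand side over all $\delta$}. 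Since the inequality holds for whatever $\delta$ the actual state has, the minimized bound is automatically valid with no further information about $\delta$. The minimizing $\delta$ turns out to be of size $w\pi^2/(b(\ell+1)^{3/2})$, and substituting it yields $\epsilon \ge \bigl(1 - \pi^2(1+O(\epsilon))/(b(\ell+1))\bigr)^2\,w^2\pi^2(\ell+1)^{-2}(1+O(\epsilon))$, from which a quadratic inequality in $\ell+1$ gives the claim. The hypothesis $\ell+1 \ge \pi^2/b$ enters only to make the factor $1 - \pi^2/(b(\ell+1))$ nonnegative so the square-root step is licit — not, as you suggest, as a "fitting" or resonance threshold.

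Your route — use the exact on-interval identity $\epsilon = 4\sin^2(\alpha\pi/(2(\ell+1)))$ and then separately prove $\delta\sqrt{\ell+1} \lesssim \pi\sqrt\epsilon/b$ — requires pinning down $\delta$ for the actual eigenstate, which you never accomplish. The formula $\delta_L = f(x_0+1)/\lambda_+$ is derived from pure exponential decay in a semi-infinite barrier; the barriers here have finite width (frequently a single site), and then $\delta_L$ also depends on the eigenfunction's values across the barrier, so it is not determined by the restriction to $I$. Moreover, solving the matching condition one actually gets $\delta_L \approx c\sqrt{\epsilon}/\bigl((\lambda_+-1)\sqrt{\ell+1}\bigr)$ rather than a $1/b$ factor, and $\lambda_+ - 1$ is not $b$ (for small $b$ it is of order $\sqrt{b}$), so even in the idealized case the claimed constant $\pi^2/b$ would not emerge from this route without additional work. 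You correctly identify this as "the main obstacle," but it is the heart of the lemma, and the paper's variational minimization over $\delta$ is exactly the device that sidesteps it.
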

\begin{proof}
First notice that the energy of $f$ satisfies the inequality
\begin{equation}
\epsilon \ge 4 \sin^2 \left(\frac {\alpha \pi}{2(\ell + 1)}\right) + b\,(\delta_L^2 + \delta_R^2),
\end{equation}
where the first term represents the kinetic (and total) energy of the sine wave over $I$ and the second term is the potential energy of the endpoints. 
Using (\ref{eq:alphaineq}) we obtain
\begin{equation}																		\label{eq:epsilonineq1}
\epsilon \ge 4 \sin^2 \left((w - \delta \sqrt{\ell + 1}) \frac {\pi}{2(\ell + 1)}\right) + b\delta^2,
\end{equation}
where $\delta = \max \{|\delta_L|, |\delta_R|\}$, as in (\ref{eq:alphaineq}). In order to proceed to the next step notice that if $\sin^2 x \le \epsilon$, then 
\begin{equation}																															\label{eq:sineineq}
\sin^2(x) \ge x^2 (1 + O(\epsilon)), \quad \text { as } \epsilon \to 0.
\end{equation}
Indeed, since
$$\sin^2 x \ge x^2 - \frac 1 3 \, x^4 = x^2\left(1 - \frac 1 3 \, x^2\right), \quad \textrm{ for }  0 \le x \le \pi, $$
and
$$ x^2 \le \frac {\pi^2}{4} \sin^2 x \le \frac {\pi^2}{4}\, \epsilon, \quad \textrm{ for }  0 \le x \le \pi/2,$$
we get 
$$\sin^2 x \ge x^2\left(1 - \frac {\pi^2}{12}\, \epsilon\right), \quad \textrm{ for }  0 \le x \le \pi/2,$$
which implies (\ref{eq:sineineq}). 

After applying inequality (\ref{eq:sineineq}) to (\ref{eq:epsilonineq1}) we have
\begin{equation}																														\label{eq:epsilonineq2}
\epsilon \ge \left(1 - \frac{\delta \sqrt{\ell + 1}} {w}\right)^2 \frac {w^2\pi^2}{(\ell + 1)^2}(1 + O(\epsilon)) + b\delta^2, \quad \textrm{ as } \epsilon \to 0.
\end{equation}
In other words, for an interval $I$ to support a distorted sine wave of frequency $\alpha$ (between $w-1$ and $w$), with the absolute value of the largest boundary condition $\delta$ and energy $\epsilon$, its length $\ell$ must satisfy (\ref{eq:epsilonineq2}). 

Let us find the boundary conditions $\delta$ that minimize the left-hand side of (\ref{eq:epsilonineq2}), which represents a quadratic expression with the positive leading coefficient with respect to $\delta$. The global minimum of this function is attained at
\begin{equation}																					
\delta = \frac {w\pi^2(1 + O(\epsilon))}{(\ell + 1)^{3/2}\left(\frac {\pi^2}{\ell + 1}(1 + O(\epsilon)) + b\right)},
\end{equation}
which, substituted into the right-hand side of (\ref{eq:epsilonineq2}), produces the inequality
\begin{equation}																								\label{eq:epsilonineq3}
\epsilon \ge \left(1 - \frac {\pi^2(1 + O(\epsilon))}{b(\ell + 1)}\right)^2 \frac {w^2\pi^2}{(\ell + 1)^2}(1 + O(\epsilon)), \quad \textrm{ as } \epsilon \to 0,
\end{equation}
for $\ell$ sufficiently large ($\ell + 1 \ge \pi^2/b$ is enough). Also, in the transition from (\ref{eq:epsilonineq2}) to (\ref{eq:epsilonineq3}) we dropped the positive term of $b\delta^2$. Again, we get that for an interval to support a state of approximate frequency $w$ with energy $\epsilon$, its length $\ell$ must satisfy (\ref{eq:epsilonineq3}). 

Let us find an explicit bound on $\ell$. Using that 
$$\sqrt{1 + O(\epsilon)} = 1 + O(\epsilon) \quad \textrm{ and } (1 + O(\epsilon))^2 = 1 + O(\epsilon), \quad \textrm{ as } \epsilon \to 0,$$
we obtain a quadratic equation on $\ell + 1$:
\begin{equation}
\sqrt{\epsilon}(\ell + 1)^2 - w\pi(1 + O(\epsilon))(\ell + 1) + \frac {w\pi^3}{b} (1 + O(\epsilon)) \ge 0, \quad \textrm{ as } \epsilon \to 0.
\end{equation}
Since the smaller root of the corresponding quadratic equation is negative, the above inequality only holds if $\ell + 1$ is larger than the bigger root,
\begin{equation}
\ell + 1 \ge \frac{w\pi(1 + O(\epsilon)) + \sqrt{w^2\pi^2(1 + O(\epsilon)) - \frac {4w\pi^3}{b} \sqrt{\epsilon}(1 + O(\epsilon)) }}{2\sqrt{\epsilon}}, \quad \textrm{ as } \epsilon \to 0.
\end{equation}
Using the Taylor series
$$\sqrt{x_0 - x} = \sqrt{x_0} - \frac {x}{2\sqrt{x_0}} + O\left(\frac {x^2}{x_0^{3/2}}\right),$$
we get the claimed condition on the interval length
\begin{equation}
\ell + 1 \ge \frac{w\pi}{\sqrt{\epsilon}}- \frac {\pi^2}{b} + (w + \frac 1 w) O(\sqrt{\epsilon}) + O(\epsilon), \quad \textrm{ as } \epsilon \to 0.
\end{equation}

\end{proof}

\begin{lemma} With probability one
\begin{equation}																							\label{eq:NLupperbnd}
N_L(\epsilon) \le \frac{q\left(1 - p^{\ell_0\frac{\pi}{\sqrt{\epsilon}} + \ell_0 O(\sqrt{\epsilon})}\right)p^{\frac{\pi}{\sqrt{\epsilon}}- \frac {\pi^2}{b}}}{p^2\left(1-p^{\frac{\pi}{\sqrt{\epsilon}} + O(\sqrt{\epsilon})}\right)}L  + o(L^\gamma), \quad \textrm{ as } L \to \infty \textrm{ and } \epsilon\to 0,
\end{equation}
where $1/2 < \gamma < 1$ and $\ell_0 (= \ell_0(\omega))$ is the length of the longest interval of zero potential.
\end{lemma}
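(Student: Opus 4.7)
The strategy mirrors Section~\ref{sec:3}: I bound $N_L(\epsilon)$ by counting admissible (interval, frequency) pairs, swap the order of summation, then replace counts by probabilities via the Glivenko--Cantelli estimate already established. The new ingredient is the length constraint proved in the preceding lemma, which incorporates the boundary-potential correction $-\pi^2/b$.

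Concretely, for any eigenstate $f$ of $H$ with $E(f)<\epsilon$, the restriction of $f$ to a zero-potential interval $I$ satisfies $-\Delta f = E(f) f$ in its interior, hence has the sine-wave form (\ref{eq:exstate}) for some $\alpha_I(f)\ge 0$. With $w(f,I):=[\alpha_I(f)]+1$, applying the preceding lemma to $f|_I$ and using $E(f)<\epsilon$ gives
\[
|I|+1 \;\ge\; \frac{w(f,I)\,\pi}{\sqrt{\epsilon}}-\frac{\pi^2}{b}+w(f,I)\,O(\sqrt{\epsilon}).
\]
Granting the dimensional claim that for each interval $I$ the number of low-energy eigenstates whose approximate frequency on $I$ equals a given integer $w$ is at most one, swapping sums as in (\ref{eq:NLineq1}) yields
\[
N_L(\epsilon) \;\le\; \sum_{w=1}^{\ell_0} \#\!\left\{I:\; |I|+1 \ge \frac{w\pi}{\sqrt{\epsilon}}-\frac{\pi^2}{b}+w\,O(\sqrt{\epsilon})\right\}.
\]

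The exponential Chebyshev argument behind (\ref{eq:Rnlimit}) then applies verbatim, so each count becomes $P[|I|>Y_w]\,n+o(n^\gamma)$ with $n=qL+o(L)$ and $Y_w$ the right-hand side above (minus $1$, rounded up). Using the geometric tail $P[|I|>y]=p^{y}$ and summing the resulting geometric series in $w$ --- dual to the computation in (\ref{eq:CFlowerbnd}), with a factor $p^{-\pi^2/b}$ coming from the $-\pi^2/b$ shift and $p^{-2}$ from the rounding correction --- produces exactly the right-hand side of (\ref{eq:NLupperbnd}).

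The principal obstacle is the dimensional claim itself: a priori, several orthogonal low-energy eigenstates of $H$ could restrict to sine waves of the same approximate frequency on the same interval $I$. I plan to control this via a min-max argument on $\mathrm{Ran}\,E_{[0,\epsilon)}(H)$, combined with the bound $b\,\|f\chi_B\|^2 \le \langle f, H f\rangle < \epsilon$ which forces the mass of low-energy states onto the zero-potential intervals and thereby makes the restriction map from the spectral subspace into the span of low-frequency sine waves on each interval essentially injective.
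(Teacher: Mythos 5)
Your overall strategy mirrors the paper's proof exactly: reduce to counting admissible $(I,w)$ pairs using the length constraint from the previous lemma, then plug in the Glivenko--Cantelli estimate and the geometric tail $P[|I|>y]=p^{y}$, and sum the geometric series to obtain (\ref{eq:NLupperbnd}). The probabilistic part of your argument is identical to the paper's computation following (\ref{eq:NLineq1})--(\ref{eq:CFlowerbnd}).

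You have also correctly identified the weak point: the inequality
$N_L(\epsilon)\le \sum_{w=1}^{\ell_0}\#\{I : |I| > \cdots\}$
requires a dimension-counting argument that the paper dispatches in a single prose sentence (``The dimension of the space of states with energy less than $\epsilon$ is bounded by counting the maximum number of possible frequencies\ldots''), without a rigorous injection or rank estimate. Your instinct to flag this is sound. However, your proposed ``dimensional claim'' and your sketch of a fix both have problems. First, the claim that at most one low-energy eigenstate can have a given approximate frequency $w$ on a given interval $I$ is stronger than what is needed and is not obviously true: since eigenvalues of a $1$D discrete Schr\"odinger operator with Dirichlet boundary conditions are simple, distinct eigenstates produce \emph{distinct} real frequencies $\alpha(E_k,\ell_I)$ on any fixed interval, but nothing forces these $\alpha$-values to differ by at least one, so several can share $[\alpha]+1=w$. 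Second, your plan to show injectivity of ``the restriction map from the spectral subspace into the span of low-frequency sine waves on each interval'' cannot work as stated, because the restriction $f|_I$ of a low-energy eigenfunction is a sine wave of \emph{non-integer} frequency with nonzero boundary values; it therefore has nonvanishing overlap with Dirichlet sine waves of \emph{every} integer frequency on $I$, and the image of the restriction map is not contained in any fixed low-frequency span. What the argument really needs is an injection of the set of eigenstates with $E<\epsilon$ into admissible $(I,w)$ pairs (or, equivalently, a rank bound obtained by comparison with an explicit operator on $\bigoplus_I \ell^2(I)$). The bound $b\|f\chi_B\|^2\le\langle f,Hf\rangle$ that you invoke is the right tool to establish injectivity of $f\mapsto P_0 f$ into $\mathcal{H}_0$, but turning that into a frequency count requires additional work that neither you nor the paper supplies.

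Finally, a small discrepancy: your length constraint carries the error term $w\,O(\sqrt\epsilon)$, whereas the paper's Lemma gives $(w+\tfrac1w)O(\sqrt\epsilon)+O(\epsilon)$. Since these are absorbed into the same $O(\sqrt\epsilon)$ in the exponent after summing the geometric series, this does not affect the final form of (\ref{eq:NLupperbnd}), but you should quote the bound as proved.
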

\begin{proof}
Each state is an eigenfunction of the form (\ref{eq:exstate}) when restricted to an interval of zero potential.  The number of possible frequencies that such a state could take on a given interval is counted by the variable $w$. For the state to have energy less than $\epsilon$, the interval must necessarily satisfy the lower bound derived in Lemma 2.   The dimension of the space of states with energy less than $\epsilon$ is bounded by counting the maximum number of possible frequencies such a state could take on a given interval of zero potential and summing these maxima over each interval of zero potential that could support at least one such state.

Similar to equation (\ref{eq:NLineq}), for a fixed $ \gamma \in (\frac 1 2 , 1)$, we have as $n \to \infty$, uniformly in $\omega$ and $\epsilon$:
\begin{align*}
N_L(\epsilon) &\le \sum_{w = 1}^{\ell_0} \# \left\{I: |I| > \frac{w\pi}{\sqrt{\epsilon}}- \frac {\pi^2}{b} -1 + (w + \frac 1 w) O(\sqrt{\epsilon}) + O(\epsilon)\right\}\\
&= \sum_{w = 1}^{\ell_0} P\left[|I| > \frac{w\pi}{\sqrt{\epsilon}}- \frac {\pi^2}{b} -1 + (w + \frac 1 w) O(\sqrt{\epsilon}) + O(\epsilon)\right]n +  o(n^{\gamma})\\
&= \sum_{w = 1}^{\ell_0} P\left[|I| > \left\lfloor \frac{w\pi}{\sqrt{\epsilon}}- \frac {\pi^2}{b} + (w + \frac 1 w) O(\sqrt{\epsilon}) + O(\epsilon)\right\rfloor -1\right]n +  o(n^{\gamma}),
\end{align*}
Further, similarly to (\ref{eq:prob1}) and (\ref{eq:CFlowerbnd}) we have
\begin{equation}																											%\label{eq:prob2}
P\left[|I| > \left\lfloor \frac{w\pi}{\sqrt{\epsilon}}- \frac {\pi^2}{b} + (w + \frac 1 w) O(\sqrt{\epsilon}) + O(\epsilon)\right\rfloor - 1\right] = p^{ \left\lfloor \frac{w\pi}{\sqrt{\epsilon}}- \frac {\pi^2}{b} + (w + \frac 1 w) O(\sqrt{\epsilon}) + O(\epsilon)\right\rfloor -1}
\end{equation}
and
\begin{align}
\notag N_L(\epsilon) &\le \sum_{w = 1}^{\ell_0} p^{\frac{w\pi}{\sqrt{\epsilon}}- \frac {\pi^2}{b} + (w + \frac 1 w) O(\sqrt{\epsilon}) + O(\epsilon)} p^{-2} p\,qL +  o(L^{\gamma})\\
\label{eq:countfncineq}&\le p^{- \frac {\pi^2}{b}}  p^{-1.5}qL \sum_{w = 1}^{\ell_0} p^{w\left(\frac{\pi}{\sqrt{\epsilon}} + O(\sqrt{\epsilon})\right)}+  o(L^{\gamma}) \\
\notag &= p^{- \frac {\pi^2}{b}}  p^{-2}qL \frac{p^{\frac{\pi}{\sqrt{\epsilon}}}\left(1 - p^{\ell_0\left(\frac{\pi}{\sqrt{\epsilon}} + O(\sqrt{\epsilon})\right)}\right)}{1-p^{\frac{\pi}{\sqrt{\epsilon}} + O(\sqrt{\epsilon})}} +  o(L^{\gamma}), \quad \textrm{ as } L\to\infty \textrm{ and } \epsilon\to 0,
\end{align}
completing the proof of (\ref{eq:LifTailsupperbnd}).
\end{proof}

\begin{proof}[Proof of (\ref{eq:LifTailsupperbnd}).]
It is obvious from (\ref{eq:countfncineq}) that for a fixed small enough $\epsilon$ 
\begin{equation}
\limsup_{L\to\infty} \frac{N_L(\epsilon)}{L} \le \frac{q p^{\frac{\pi}{\sqrt{\epsilon}}- \frac {\pi^2}{b}}}{p^2\left(1-p^{\frac{\pi}{\sqrt{\epsilon}} + C(\epsilon)}\right)},
\end{equation}
where the constant $C(\epsilon)$ behaves as $O(\sqrt{\epsilon})$ as $\epsilon \to 0$.
\end{proof}

\def\cprime{$'$}

\section{Acknowledgements}
The authors would like to thank R. Sims, L. Friedlander, and K.McLaughlin, A. Fedorenk for useful discussions.  M. Bishop and J. Wehr were partly supported by NSF grant DMS 0623941.

\end{document}